\begin{document}

\newtheorem{corollary}{Corollary}[section]
\newtheorem{lemma}{Lemma}[section]
\newtheorem{theorem}{Theorem}[section]
\newtheorem{definition}{Definition}[section]

\title[Edge Union of Networks on the Same Vertex Set]{Edge Union of Networks on the Same Vertex Set.}

\author{Chuan Wen, Loe$^1$ and Henrik Jeldtoft Jensen$^2$}

\address{Department of Mathematics and Complexity \& Networks Group\\ Imperial College London, London, SW7 2AZ, UK\\}
\ead{(1) c.loe11@imperial.ac.uk,  (2) h.jensen@imperial.ac.uk}
\begin{abstract}
Random networks generators like Erd\H{o}s-R\'{e}nyi, Watts-Strogatz and Barab\'{a}si-Albert models are used as models to study real-world networks. Let $G^1(V,E_1)$ and $G^2(V,E_2)$ be two such  networks on the same vertex set $V$. This paper studies the degree distribution and clustering coefficient of the resultant networks, $G(V, E_1 \cup E_2)$.
\end{abstract}

\maketitle

\section{Introduction}
Random networks are used as models to study real-world networks like the Internet and social networks. Some of the families of random networks generators are Erd\H{o}s-R\'{e}nyi, Watts-Strogatz and Barab\'{a}si-Albert models. These models generate  networks with only one type of edges and hence may be thought of as relating to  agents interacting under a homogeneous relation \cite{Newman:2010:NI:1809753, David:2010:NCM:1805895, Jackson:2008:SEN:1571644}.

However recently research has started to address the limitations of homogeneous networks. In the real world, agents may be related through more than one form of relations. E.g. think of social interactions between people, one type of relations may consist of family bonds another of shared interests, such as chess, a third could be professional relations and so forth. 

There are many names for networks with multiple types of connections, different researchers use different names:   Multigraph \cite{gjoka:multigraph}, Multi-Layered \cite{Piotr,Nefedov,Thiran,li}, Multi-Relational \cite{Rodriguez,journals/corr/abs-0806-2274,szell,Davis,Cai}, Multi-Dimensional \cite{hossmann12a,berlingerio:finding,{conf/icdm/RossettiBG11},conf/asunam/BerlingerioCGMP11,tang,tang2}, Multiplex \cite{szell,Mucha,Cardillo,Gligor,Brummitt}, Multi-Modal \cite{Heath} or graphs on the same vertex set \cite{journals/cpc/KuhnO07}. We will here use the term multi-dimensional.
We may think of these networks as compose of layers of different relations (dimensions). Each layer has the same set of vertices, and in every layer the vertices are connected through different relationships.

There are various proposals to solve classic problems like link prediction and community detection in multi-dimensional networks. However, so far the structural property of the entire network has not been investigated. Given the statistical properties of two layers, we address in this paper the statistical properties of the union of the different types of edges. Combining edges of different types is known as the \emph{edge union of networks on the same vertex set}.

We note that the famous Zachary Karate Club Network \cite{zachary1977ifm} is an example of an edge union of 8 networks on the same vertex set. Any two members (vertices) of the karate club are connected if they have consistent interactions outside classes and club meetings. ``Outside interactions" are defined in terms of  8 different types of relationships. One of such relations is the association in and between academic classes at the university. The 8 different relationships are then combine as edges (of different types) on the same vertex set. Other real-world applications of these kind of networks can be found in \cite{KimL10-3} and \cite{0295-5075-71-1-152}.

As another example of the relevance of multi-dimensional networks think of the network of email exchanges between people. Assume that the preferential attachment of the Barab\'{a}si-Albert model is true in this context, where a new member tends to email the more popular members. However being ``popular" at a personal level is different from being ``popular" in the work environment. If the database of email exchange does not distinguish the two contexts, the email exchange network is essentially the union of two Barab\'{a}si-Albert networks.

The present paper is a study of some of the statistical properties of a random network that is formed by the union of two random networks. The different families of random network are distinguished by their clustering coefficients and the degree distribution of the vertices and we want to determine the corresponding measures for the union of different types of networks. We first consider the degree distributions and then the clustering coefficients. 

\section{Network Families}
This section summarises the construction of Erd\H{o}s-R\'{e}nyi, Watts-Strogatz and Barab\'{a}si-Albert models. Recall that these random network generators were designed to construct simple graphs. We are going to combine networks with different edge statistics but with the same number of vertices, accordingly we suppress in the notation below the number of vertices in the labelling of the specific network types. 
For instance, the usual notation for Gilbert Graph is $G_{n,p}$, where $n$ is the size of the vertex set. We simplify the notation to $G_p$ as the size of the vertex set is implied to be $n$ for all the networks in the rest of the paper. Lastly, let $V$ and $E$ be vertex set and edge set respectively.

\subsection{Erd\H{o}s-R\'{e}nyi}
A realisation of  an Erd\H{o}s-R\'{e}nyi network \cite{citeulike:4012374} is selected with equal probability from the set of all possible graphs with $n=|V|$ vertices and $|E|$ edges. However to generate a huge  random Erd\H{o}s-R\'{e}nyi is difficult. To circumvent this problem one may instead let the number of edges  fluctuate slightly and consider a  Gilbert graph \cite{Gilbert1959} $G_{p}$ in which every vertex pair is connected with probability $p$ where $p \approx |E|/{ |V|  \choose 2}$. The slight difference is that Erd\H{o}s-R\'{e}nyi has precisely $|E|$ edges while Gilbert graph has approximately $|E|$ edges with high probability.

\subsection{Watts-Strogatz}
A Watts-Strogatz network \cite{1998Natur.393..440W}, $W_{w,q}$ is parameterised by $w$ and $q$ for mean degree\footnote{We replace the commonly used variable $k$ with $w$ to avoid confusion with the variable $k$ commonly used in degree distribution function} and probability of rewiring respectively. The construction begins with a regular ring lattice where each vertex connects to $w/2$ neighbours on each side.

For each vertex $v_i \in V$ and $i<a$, each edge leaving $v_i$ is rewired with probability $q$. The rewiring replaces $\{v_i, v_a\}$ with $\{v_i, v_b\}$ where $v_b$ is chosen uniformly in $V$ and the resultant network remains a simple network.

The key property of the Watts-Strogatz network is that it varies between a regular ring lattice ($q=0$) and an Erd\H{o}s-R\'{e}nyi network ($q=1$). As $q \rightarrow 1$, Watts-Strogatz can be expressed as a Erd\H{o}s-R\'{e}nyi graph, i.e. $W_{w,q} \rightarrow G_{w/(n-1)}$.

\subsection{Barab\'{a}si-Albert}
 Barab\'{a}si-Albert network \cite{barabasia99} is denoted by $B_m$ where $m$ is the number of new edges at each iteration. The network construction begins with some arbitrary small number of vertices connected randomly. 

At each iteration, one new vertex of degree $m$ is added. The edges of the new vertex are connected probabilistically with a probability proportional to the degree of the existing vertices. Define $deg(v_i)$ as the degree of vertex $v_i$. The probability that the new vertex is connected to vertex $v_i$ is given by:
\begin{equation*}
p_i = \frac{deg(v_i)}{\sum_j deg(v_j)}.
\end{equation*}
This is referred to as preferential attachment.

\section{Definitions and Preliminaries}
\subsection{Definitions}
For brevity, the resultant network from the edge union of networks is defined as the \emph{composite network} (abbreviation: $\mathcal{C}$). A \emph{network} refers to either an Erd\H{o}s-R\'{e}nyi, Watts-Strogatz or Barab\'{a}si-Albert network (abbreviation: ER, WS, BA respectively). The composition of all 6 pairwise edge unions of these networks are $\{ER, ER\}$, $\{ER, WS\}$, $\{ER,BA\}$, $\{WS, WS\}$, $\{WS, BA\}$ and $\{BA, BA\}$.

\begin{definition}
Let $G^1(V,E_1)$ and $G^2(V,E_2)$ be two networks on the same vertex set $V$. The \textbf{edge union of the networks} gives the composite network, $\mathcal{C}(V, E_1 \cup E2)$.
\end{definition}

\begin{definition}
Let $G^1(V,E_1)$ and $G^2(V,E_2)$ be two  networks on the same vertex set $V$. An edge $e \in E_1 \cup E_2$ in the composite network is called a \textbf{common edge} if $e \in E_1 \cap E_2$. Hence the set of common edges is $E_1 \cap E_2$.
\end{definition}

\begin{center}
  \includegraphics[width=130mm]{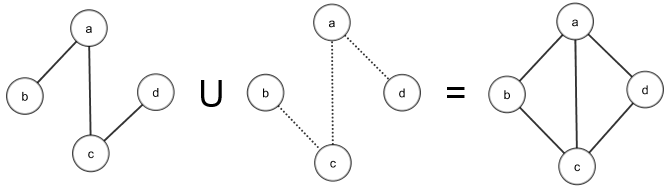}
    \captionof{figure}{Edge union of two networks on the same vertex set. The solid edges and the dotted edges are from the edge set of different networks. The composite network is the rightmost network. Edge $\{a,c\}$ is a common edge between the two networks.}\label{graph_example}
\end{center}

\begin{definition}
Consider a population of $n$  urns. Choose $a<n$ urns and place one red ball in each. Next choose $b<n$ urns and place one blue ball in each. The probability that exactly $i$ urns  contain one red and one  blue ball is given the hypergeometric function
\begin{equation*}
\mathcal{H}(i;n,a,b) = \frac{{a \choose i} {n-a \choose b-i}}{{n \choose b}}
\end{equation*}
\end{definition}

\subsection{Preliminaries}
The following lemmas and concepts are useful in the later analysis.

\begin{lemma}\label{thm:ProbOverlap}
Let $E_1$ and $E_2$ be the sets of edges of two networks with $n$ vertices each. The probability that there are $\epsilon$ common edges is:
\begin{equation}\label{ProbOverlap}
P(|E_1 \cap E_2|=\epsilon) = \mathcal{H}(\epsilon; {n \choose 2},|E_1|,|E_2|)
\end{equation}
\end{lemma}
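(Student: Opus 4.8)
The plan is to recognize that the number of common edges is exactly the number of vertex-pairs that receive an edge in both $E_1$ and $E_2$, and then to map this onto the urn-occupancy setup of the hypergeometric function. First I would identify the sample space: the set of all possible edges on $n$ vertices is the set of $\binom{n}{2}$ unordered vertex-pairs, so I set $N = \binom{n}{2}$ and treat each such pair as one of the $N$ urns in the definition of $\mathcal{H}$. The edge set $E_1$ then corresponds to placing a ``red ball'' in each of $a = |E_1|$ urns (the pairs that are edges of $G^1$), and $E_2$ corresponds to placing a ``blue ball'' in each of $b = |E_2|$ urns. A common edge, i.e.\ an element of $E_1 \cap E_2$, is precisely an urn holding both a red and a blue ball, so $|E_1 \cap E_2| = \epsilon$ is exactly the event that $i = \epsilon$ urns are doubly occupied.

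With this correspondence fixed, the substitution $\mathcal{H}(\epsilon; \binom{n}{2}, |E_1|, |E_2|)$ follows directly from the definition, provided the two hypotheses of that combinatorial definition hold. I would therefore next verify the uniformity assumption it implicitly requires: that, conditioned on the sizes $|E_1|$ and $|E_2|$, the set $E_2$ is chosen uniformly at random among all $\binom{N}{|E_2|}$ size-$|E_2|$ subsets of the $N$ pairs, independently of the particular placement of $E_1$. Under this assumption the count $i$ of doubly-occupied urns follows the classical hypergeometric law: the $\binom{a}{i}$ chooses which of the $a$ red urns are hit by blue balls, the $\binom{N-a}{|E_2|-i}$ places the remaining blue balls among the red-free urns, and $\binom{N}{|E_2|}$ normalizes over all blue placements, giving exactly Eq.~(\ref{ProbOverlap}).

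The step I expect to be the genuine obstacle is justifying that uniform-and-independent placement in the general setting the paper cares about. For a Gilbert graph each pair is an independent Bernoulli trial, so conditioning on the realized edge count $|E_2|$ does yield a uniform random subset of that size, and the argument is clean. But for Watts-Strogatz and especially Barab\'asi-Albert the edges are neither independent nor exchangeable across all $\binom{n}{2}$ pairs, so the true joint law of $E_2$ is not the uniform one assumed here. The honest reading is that Lemma~\ref{thm:ProbOverlap} is an exact statement when at least one of the two networks has its edges placed uniformly at random among all vertex-pairs given its size, and is otherwise the natural approximation obtained by replacing the intricate edge law of a WS or BA network with a uniform null model of the same cardinality. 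I would state this dependence assumption explicitly, prove the lemma rigorously under it by the counting argument above, and flag that its applicability to WS and BA unions rests on treating one layer as a uniformly random edge set of the given size.
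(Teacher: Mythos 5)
Your proposal is correct and uses essentially the same argument as the paper: identify the $\binom{n}{2}$ vertex pairs as urns, treat $E_1$ and $E_2$ as red and blue ball placements, and read off the hypergeometric probability from the definition of $\mathcal{H}$. Your additional observation---that the lemma is exact only under uniform random placement of (at least one of) the edge sets given their sizes, which holds for Gilbert graphs but is only an approximation for Watts--Strogatz and Barab\'asi--Albert layers---is a legitimate caveat that the paper's one-line proof leaves implicit, and stating it explicitly strengthens rather than departs from the paper's reasoning.
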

\begin{proof}
Choose $|E_2|$ edges out of a total possible $n \choose 2$ edges and colour them blue. Next choose $|E_1|$ edges from the set $E_1$, the probability of getting $\epsilon$ blue edges is defined by the hypergeometric function.
\end{proof}

We note that equation \ref{ProbOverlap} is the probability that there are $\epsilon$ common edges between the two networks. In addition, the expected number of common edges is given by:

\begin{corollary}\label{lemma:ExpectedOverlap}
Let $E_1$ and $E_2$ be the set of edges of two networks with $n$ vertices each. The expected number of common edges is the expectancy of the hypergeometric function in equation \ref{ProbOverlap}:

\begin{equation}
\mathbb{E}[|E_1 \cap E_2|] = \mathbb{E}[\mathcal{H}] = \frac{|E_1| \cdot |E_2|}{{n \choose 2}}
\end{equation}
\end{corollary}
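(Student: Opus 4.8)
The plan is to use Lemma \ref{thm:ProbOverlap}, which identifies $|E_1 \cap E_2|$ as a hypergeometric random variable with parameters ${n \choose 2}$, $|E_1|$ and $|E_2|$, and then to compute its mean. There are two natural routes: a direct evaluation of the defining sum $\mathbb{E}[\mathcal{H}] = \sum_i i\,\mathcal{H}(i;{n \choose 2},|E_1|,|E_2|)$, or an application of linearity of expectation through indicator variables. I would favour the latter, since it sidesteps the binomial-coefficient manipulations entirely and ties the statement directly back to the urn construction.

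For the indicator route, I would return to the urn model underlying the hypergeometric function. Treat each of the ${n \choose 2}$ possible edges as an urn, and for each such edge $e$ introduce the indicator $X_e$ that equals $1$ precisely when the corresponding urn holds both a red and a blue ball, i.e. when $e \in E_1 \cap E_2$, and $0$ otherwise. Then $|E_1 \cap E_2| = \sum_e X_e$, so linearity of expectation gives $\mathbb{E}[|E_1 \cap E_2|] = \sum_e P(e \in E_1 \cap E_2)$. Fixing the red balls $E_1$ and placing the $|E_2|$ blue balls uniformly at random, the symmetry of the urns makes $P(e \in E_2) = |E_2|/{n \choose 2}$ identical for every $e$; summing this over the $|E_1|$ urns that already contain a red ball (the rest contribute nothing) yields exactly $|E_1| \cdot |E_2|/{n \choose 2}$, as claimed.

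The main obstacle is present only in the alternative direct route, and it is purely algebraic: to evaluate $\sum_i i\,\mathcal{H}$ one absorbs the factor $i$ using the identity $i{a \choose i} = a{a-1 \choose i-1}$, reindexes the sum, and then collapses what remains via Vandermonde's identity back into a single binomial coefficient, recovering ${n \choose 2}^{-1}$ times $|E_1|\,|E_2|$. This is the textbook derivation of the hypergeometric mean; it is routine but notationally heavy, which is exactly why I expect the indicator argument to be the cleaner and more transparent proof to present.
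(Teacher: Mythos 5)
Your proposal is correct, but it does more work than the paper does. The paper offers no derivation at this point: having identified $|E_1 \cap E_2|$ as a hypergeometric random variable in Lemma \ref{thm:ProbOverlap}, it simply quotes the classical formula for the hypergeometric mean, $\mathbb{E}[\mathcal{H}(\,\cdot\,;N,a,b)] = ab/N$, with $N = {n \choose 2}$, $a = |E_1|$, $b = |E_2|$. Your indicator-variable argument actually proves that formula from scratch, inside the same urn model the paper sets up in its definition of $\mathcal{H}$: writing $|E_1 \cap E_2| = \sum_e X_e$ over the ${n \choose 2}$ candidate edges and applying linearity of expectation is exactly right, and your symmetry claim $P(e \in E_2) = |E_2|/{n \choose 2}$ for each fixed edge $e$ is valid when $E_2$ is a uniformly random $|E_2|$-subset of the possible edges. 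What the paper's route buys is brevity (one line, citing a standard fact); what yours buys is self-containedness, avoidance of the $i{a \choose i} = a{a-1 \choose i-1}$ and Vandermonde manipulations you correctly identify as the cost of the direct route, and an argument that transfers verbatim to Corollary \ref{lemma:ExpectedOverlapClique} (urns indexed by the ${n \choose c}$ possible $c$-cliques) and Corollary \ref{lemma:ExpectedOverlapAtVertex} (urns indexed by the $n-1$ possible neighbours of a vertex). One point worth flagging in either treatment: the identification of $|E_1 \cap E_2|$ as hypergeometric rests on modelling the two edge sets as independent uniformly random subsets of the ${n \choose 2}$ vertex pairs, which is an assumption about the generating networks (exact for Erd\H{o}s--R\'enyi, an approximation for the other families), not something your argument or the paper's needs to, or could, prove.
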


In fact we can generalise the above to compute the number of common cliques in the union. In particular it is useful to count the number of common triangles (3-clique) as it affects the accuracy of the clustering coefficient in the union to two networks.

\begin{lemma}\label{thm:ProbOverlapClique}
Let $K_1$ and $K_2$ be the sets of c-cliques of two networks with $n$ vertices each. The probability that there are $\epsilon$ common c-cliques is:
\begin{equation}\label{ProbOverlapClique}
P(|K_1 \cap K_2|=\epsilon) = \mathcal{H}(\epsilon; {n \choose c},|K_1|,|K_2|)
\end{equation}
\end{lemma}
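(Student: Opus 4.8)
The plan is to carry over the urn argument used for the edge case (Lemma~\ref{thm:ProbOverlap}, which is exactly the present statement specialised to $c=2$) essentially verbatim, with a single structural observation replacing edges by c-cliques. A c-clique on the vertex set $V$ is determined entirely by the unordered set of $c$ vertices it spans, so the number of \emph{possible} c-cliques on $n$ vertices is exactly ${n \choose c}$. This is the quantity that will play the role of the urn population in the hypergeometric definition given earlier.

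Concretely, I would set up ${n \choose c}$ urns, one for each $c$-subset of $V$. The $|K_2|$ cliques of the second network correspond to dropping a blue ball into $|K_2|$ of these urns, and the $|K_1|$ cliques of the first network correspond to dropping a red ball into $|K_1|$ of them. A \emph{common} c-clique is then precisely an urn that receives both a red and a blue ball, so the event $|K_1 \cap K_2| = \epsilon$ is exactly the event that $\epsilon$ urns hold one ball of each colour. Invoking the hypergeometric definition with population ${n \choose c}$, red count $|K_1|$ and blue count $|K_2|$ immediately yields $P(|K_1 \cap K_2| = \epsilon) = \mathcal{H}(\epsilon; {n \choose c}, |K_1|, |K_2|)$, which is the claim. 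No new computation is required beyond recognising ${n \choose c}$ as the size of the sample space.

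The step that genuinely needs justification — and which I expect to be the main obstacle — is the implicit uniformity assumption that legitimises the hypergeometric distribution. To read off $\mathcal{H}$ one must treat $K_1$ as a uniformly random size-$|K_1|$ subset of the ${n \choose c}$ potential cliques, drawn independently of $K_2$. For $c=2$ this is unproblematic, since it is precisely the uniform random-edge-set model underlying Lemma~\ref{thm:ProbOverlap}. For $c>2$, however, the potential c-cliques are not independent: two $c$-subsets sharing $c-1$ vertices also share an edge, and the family of c-cliques actually realisable in a graph is heavily constrained, since a clique is present only when all of its constituent edges are. Thus the cliques of a random graph do not in general form a uniformly random subset of the ${n \choose c}$ urns, and the lemma is best understood as adopting the same idealisation already built into Lemma~\ref{thm:ProbOverlap}: conditioning only on the clique counts $|K_1|,|K_2|$ and assuming the two clique sets are placed exchangeably and independently across the labelled vertex-$c$-subsets. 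I would state this assumption explicitly and then let the urn argument of the previous paragraph close the proof.
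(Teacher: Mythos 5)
Your proposal matches the paper's own proof essentially verbatim: the paper likewise takes the ${n \choose c}$ possible c-cliques as the urn population, colours the $|K_2|$ cliques of one network blue, and reads off the hypergeometric probability of $\epsilon$ overlaps when the $|K_1|$ cliques of the other network are drawn. Your closing caveat about the uniformity/independence idealisation is a fair observation -- the paper makes the same implicit assumption without comment -- but it does not alter the argument, which is the same as the paper's.
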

\begin{proof}
Choose $|K_2|$ c-cliques out of a total possible $n \choose c$ c-cliques and colour them blue. Next choose $|K_1|$ c-cliques from the set $K_1$, the probability of getting $\epsilon$ blue c-cliques is defined by the hypergeometric function.
\end{proof}

\begin{corollary}\label{lemma:ExpectedOverlapClique}
Let $K_1$ and $K_2$ be the sets of c-cliques of two networks with $n$ vertices each. The expected number of common c-cliques is the expectancy of the hypergeometric function in equation \ref{ProbOverlapClique}:

\begin{equation}
\mathbb{E}[|K_1 \cap K_2|] = \mathbb{E}[\mathcal{H}] = \frac{|K_1| \cdot |K_2|}{{n \choose c}}
\end{equation}
\end{corollary}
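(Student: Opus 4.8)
The plan is to observe that Equation~(\ref{ProbOverlapClique}) is exactly the probability mass function of a hypergeometric random variable drawn from a population of ${n \choose c}$ possible c-cliques, with $|K_1|$ marked cliques and $|K_2|$ draws. Consequently the quantity to be computed, $\mathbb{E}[|K_1 \cap K_2|] = \sum_{\epsilon} \epsilon \, \mathcal{H}(\epsilon; {n \choose c}, |K_1|, |K_2|)$, is simply the mean of this distribution, and the target $|K_1|\cdot|K_2| / {n \choose c}$ is the well-known hypergeometric mean. I would present two routes and favour the second.

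The first route evaluates the defining sum directly. Abbreviating $N = {n \choose c}$, $a = |K_1|$ and $b = |K_2|$, I would apply the absorption identity $\epsilon {a \choose \epsilon} = a {a-1 \choose \epsilon-1}$ to extract a factor of $a$, reindex the summation by $j = \epsilon - 1$, and collapse the remaining sum $\sum_{j} {a-1 \choose j}{N-a \choose b-1-j} = {N-1 \choose b-1}$ using Vandermonde's convolution. Dividing by ${N \choose b}$ and simplifying the ratio ${N-1 \choose b-1}/{N \choose b} = b/N$ then produces $ab/N$, as claimed.

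The second and cleaner route uses linearity of expectation and sidesteps the binomial identities altogether. Following the urn construction of Lemma~\ref{thm:ProbOverlapClique}, I would treat $K_2$ as a uniformly random $|K_2|$-subset of the ${n \choose c}$ possible c-cliques. For each clique $\kappa \in K_1$ define the indicator $X_\kappa = \mathbf{1}[\kappa \in K_2]$, so that $|K_1 \cap K_2| = \sum_{\kappa \in K_1} X_\kappa$. By symmetry any fixed clique belongs to the random set $K_2$ with probability $|K_2|/{n \choose c}$, whence $\mathbb{E}[X_\kappa] = |K_2|/{n \choose c}$, and linearity of expectation gives $\mathbb{E}[|K_1 \cap K_2|] = |K_1| \cdot |K_2| / {n \choose c}$ immediately.

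Since the corollary is nothing more than the mean of a hypergeometric distribution, I anticipate no genuine obstacle. The only care required is in the first route, where one must track the shifted index and check that the boundary terms of Vandermonde's identity vanish so that the reindexed sum is truly complete; the indicator argument avoids even this bookkeeping, so it is the version I would ultimately record.
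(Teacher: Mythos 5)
Your proposal is correct and takes essentially the same approach as the paper: the paper proves this corollary simply by identifying $|K_1 \cap K_2|$ as a hypergeometric random variable via Lemma~\ref{thm:ProbOverlapClique} and quoting the standard mean $\frac{|K_1|\cdot|K_2|}{{n \choose c}}$ of that distribution, which is exactly your framing. Your two derivations of that mean (Vandermonde absorption and the indicator/linearity argument) merely fill in the well-known fact that the paper leaves implicit, so there is no substantive difference in route.
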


To compute the degree distribution of the composite network accurately, we have to minimise any double counting of the common edges. For example in Fig. \ref{graph_example}, vertex $a$ of the composite network has degree 3. It is the sum of the degree of the networks (degree 2 from each network) \textbf{minus} the number of common edges ($=1$) at vertex $a$.

\begin{lemma}\label{lemma:VertexProbOverlap}
Let $v_{i,G^1}$ and $v_{i,G^2}$ be the vertices of networks $G^1$ and $G^2$ each with $n$ vertices respectively. Let $d_1=deg(v_{i,G^1})$ and $d_2=deg(v_{i,G^2})$, the probability that there are $\epsilon$ common edges between $v_{i,G^1}$ and $v_{i,G^2}$ is:

\begin{equation}\label{eq:VertexProbOverlap}
P_{ce}(\epsilon | d_1, d_2) = \mathcal{H}(\epsilon;n-1,d_1,d_2)
\end{equation}
\end{lemma}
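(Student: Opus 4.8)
The plan is to recognise this as a localised version of Lemma~\ref{thm:ProbOverlap}: instead of drawing from the full set of $\binom{n}{2}$ possible edges of the whole graph, I would restrict attention to the $n-1$ potential edges incident to the single fixed vertex $v_i$. First I would set up the urn correspondence used to define $\mathcal{H}$, but with $n-1$ urns: the other $n-1$ vertices (equivalently, the $n-1$ candidate edges $\{v_i,v_j\}$ for $j\neq i$) play the role of the urns. The fact that $v_i$ has degree $d_1$ in $G^1$ corresponds to placing a red ball in $d_1$ of these urns, and its having degree $d_2$ in $G^2$ corresponds to placing a blue ball in $d_2$ urns. A common edge at $v_i$ — an edge $\{v_i,v_j\}$ lying in both $E_1$ and $E_2$ — is then exactly an urn that holds both a red and a blue ball.

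Second, conditional on the degrees $d_1$ and $d_2$, I would argue that the $d_2$ neighbours of $v_i$ in $G^2$ form a uniformly random $d_2$-subset of the $n-1$ available slots, independent of which slots are occupied by the $G^1$-edges. Under this hypothesis the number $\epsilon$ of doubly-occupied slots is, by the identical counting to Lemma~\ref{thm:ProbOverlap}, hypergeometric: among the ${n-1 \choose d_2}$ equally likely choices of the blue set, exactly ${d_1 \choose \epsilon}{n-1-d_1 \choose d_2-\epsilon}$ of them hit precisely $\epsilon$ of the $d_1$ red slots, which gives $P_{ce}(\epsilon \mid d_1,d_2)=\mathcal{H}(\epsilon;n-1,d_1,d_2)$. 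This step is essentially a one-line invocation of the hypergeometric model once the urn dictionary is in place.

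The main obstacle — and the point I would be most careful to state explicitly — is the uniformity-and-independence assumption of the second step. The clean hypergeometric conclusion needs that, after conditioning on the two degrees, the identities of $v_i$'s neighbours in the two networks are exchangeable and independent across the two networks. This is exactly true for a Gilbert/\textbf{ER} neighbourhood and essentially harmless for a fully rewired \textbf{WS} neighbourhood, but for \textbf{BA} graphs preferential attachment biases the neighbour of a high-degree vertex towards other high-degree vertices, so the formula is properly an approximation that treats the conditional neighbour set as uniform. I would therefore phrase the lemma as holding conditionally on $d_1,d_2$ with neighbours drawn uniformly, and remark that this is the same modelling approximation already inherited from Lemma~\ref{thm:ProbOverlap}.
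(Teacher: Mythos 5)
Your proposal is correct and follows essentially the same route as the paper: the paper's proof likewise treats the $n-1$ other vertices as the urns and invokes the hypergeometric argument of Lemma~\ref{thm:ProbOverlap} with $d_1$ and $d_2$ edges as the two colours of balls. Your explicit statement of the uniformity-and-independence assumption (exact for ER, only approximate for BA) is a caveat the paper leaves implicit, but it does not change the argument.
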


\begin{proof}
There are only $n-1$ vertices left for $v_{i,G^1}$ and $v_{i,G^2}$ to connect to. Using the same argument as in Lemma \ref{thm:ProbOverlap}, there are $d_1$ blue edges and $d_2$ red edges from $v_{i,G^1}$ and $v_{i,G^2}$ respectively.
\end{proof}

\begin{corollary}\label{lemma:ExpectedOverlapAtVertex}
Let $v_{i,G^1}$ and $v_{i,G^2}$ be the vertices of networks $G^1$ and $G^2$ with $n$ vertices respectively. Let $d_1=deg(v_{i,G^1})$ and $d_2=deg(v_{i,G^2})$, the expected number of common edges between $v_{i,G^1}$ and $v_{i,G^2}$ is:

\begin{equation}
\mathbb{E}[\mathcal{H}] = \frac{|d_1| \cdot |d_2|}{n-1}
\end{equation}
\end{corollary}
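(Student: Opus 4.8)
The plan is to read off this expectation directly from the distribution established in Lemma \ref{lemma:VertexProbOverlap}, where the number of common edges at the matched vertex pair is shown to follow $\mathcal{H}(\epsilon; n-1, d_1, d_2)$. The corollary then reduces to the standard fact that the hypergeometric distribution $\mathcal{H}(i; N, a, b)$ has mean $ab/N$; specialising to $N = n-1$, $a = d_1$, $b = d_2$ yields $\frac{d_1 d_2}{n-1}$. This mirrors exactly the passage from Lemma \ref{thm:ProbOverlap} to Corollary \ref{lemma:ExpectedOverlap}, with the global parameters $\binom{n}{2}, |E_1|, |E_2|$ replaced by the local ones $n-1, d_1, d_2$, so in principle one could simply invoke the earlier computation. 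For completeness I would restate the expectation as
\begin{equation*}
\mathbb{E}[\mathcal{H}] = \sum_{\epsilon} \epsilon \, \mathcal{H}(\epsilon; n-1, d_1, d_2)
\end{equation*}
and then justify the closed form.

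Rather than grinding through the binomial-coefficient algebra (a Vandermonde-type identity), I would prefer a short indicator-variable argument, which is both cleaner and more transparent in the network setting. Following the urn picture of Definition 3.3, regard the $n-1$ candidate neighbours of the matched vertex as urns; the $d_1$ edges of $v_{i,G^1}$ mark $d_1$ of them ``red,'' and the $d_2$ edges of $v_{i,G^2}$ select $d_2$ of them uniformly at random as ``blue.'' For the $j$-th blue edge let $X_j = 1$ if its endpoint is also a red neighbour and $X_j = 0$ otherwise. By symmetry each blue endpoint is equally likely to land on any of the $n-1$ candidates, of which $d_1$ are red, so $\mathbb{E}[X_j] = d_1/(n-1)$. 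Linearity of expectation then gives
\begin{equation*}
\mathbb{E}[\mathcal{H}] = \mathbb{E}\Bigl[\sum_{j=1}^{d_2} X_j\Bigr] = \sum_{j=1}^{d_2} \frac{d_1}{n-1} = \frac{d_1 d_2}{n-1},
\end{equation*}
which is the claimed result.

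For a statement this elementary there is no genuine obstacle; the only point requiring a moment's care is that the indicators $X_j$ are \emph{not} independent, since the blue edges are drawn without replacement among distinct neighbours. This dependence would matter for computing the variance of $\mathcal{H}$, but it is irrelevant here because linearity of expectation holds without any independence assumption, so the sum above is valid as written. I would also flag the minor notational point that, since degrees are non-negative integers, the absolute-value bars in $|d_1| \cdot |d_2|$ are redundant and the numerator is simply $d_1 d_2$.
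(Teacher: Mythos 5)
Your proposal is correct and takes essentially the same approach as the paper: the paper treats this as an immediate corollary of Lemma~\ref{lemma:VertexProbOverlap}, read off by taking the mean $ab/N$ of the hypergeometric distribution $\mathcal{H}(\epsilon; n-1, d_1, d_2)$, exactly parallel to the passage from Lemma~\ref{thm:ProbOverlap} to Corollary~\ref{lemma:ExpectedOverlap}. Your indicator-variable argument is simply a correct, self-contained justification of the standard mean formula that the paper quotes without proof, and your remarks on the non-independence of the indicators and the redundant absolute-value bars are both apt.
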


In this paper we only consider the case where both networks have approximately the same edge set size. Unequal edge sets size introduces additional complexity to the study, since the larger edge set dominates the resultant network.

\section{Degree Distribution of Edge Union of Networks}\label{sec:DD}

\subsection{Erd\H{o}s-R\'{e}nyi with Erd\H{o}s-R\'{e}nyi}\label{ER2ER}
Let $G^1_{p}$ and  $G^2_{p'}$ be two independent Gilbert Graphs. The probability that an edge does not exist in both $G^1_{p}$ and $G^2_{p'}$ is $(1-p)(1-p')$, accordingly the probability that the composite network will contain an edge is given by $1-(1-p)(1-p')$ and we have
\begin{equation}\label{eq:ER2ER}
G^1_{p} \cup G^2_{p'} = G_{1-(1-p)(1-p')}
\end{equation}

\subsection{Erd\H{o}s-R\'{e}nyi with Watts-Strogatz}\label{ER2WS}
Let $G_{p}$ and $W_{w,q}$ be an ER network and WS network respectively. As $q \rightarrow 1$, $W_{w,q}$ can be approximated as an ER so in this limit the union will be scribed by Eq. (\ref{eq:ER2ER}).

For $q \rightarrow 0$, the degree distribution of the WS network approaches $P^{W}(k)=\delta(k-w)$, the number of vertices with degree $k$. Neglecting common edges, a vertex in the composite network will have degree $k$ when the corresponding vertex in ER has degree $k-w$. Hence an approximation to the degree distribution of the composite network is:

\begin{equation}\label{ER2WS_dd}
P^{\mathcal{C}}(k) \sim \frac{(np)^{(k-w)} e^{-np}}{(k-w)!}
\end{equation}

We can improve the estimate of the degree distribution by explicitly discounting the common edges in the union. This will become of increasing importance as $q$ is increased from 0. Let the degree distribution of ER and WS be $P^{G}(k)$ and $P^{W}(k)$ respectively. To account for the common edges for $P^\mathcal{C}(k)$, we have to consider:

\begin{enumerate}
\item Probability that a vertex in WS, $v_{ws}$ has degree $j$;
\item Probability that a vertex in ER, $v_{er}$ has degree $k+ \epsilon -j$;
\item Probability that there are $\epsilon$ common edges between $v_{er}$ and $v_{ws}$.
\end{enumerate}

From lemma \ref{lemma:VertexProbOverlap}, the probability that there are $\epsilon$ common edges between $v_{er}$ and $v_{ws}$ is $P_{ce}(\epsilon | j,k+ \epsilon -j)$. Thus the improvement on equation \ref{ER2WS_dd} is:

\begin{equation}\label{ER2WS_P}
P^{\mathcal{C}}(k) \sim \sum_\epsilon^k \sum_j^k P_{ce}(\epsilon | j,k+ \epsilon -j) P^{W}(j) P^{G}(k+ \epsilon -j)
\end{equation}

Since WS is lattice-like, most of the WS vertices have  a degree close to $w$. By assuming $P^W(k) = \delta(k-w)$  we can simplify equation \ref{ER2WS_P} to:

\begin{equation}
P^{\mathcal{C}}(k) \sim \sum_\epsilon^w P_{ce}(\epsilon | w,k+ \epsilon -w) P^{G}(k+ \epsilon -w)
\end{equation}

\subsection{Erd\H{o}s-R\'{e}nyi with Barab\'{a}si-Albert}\label{ER2BA}
It has been noticed that the degree distribution of the BA model often does not represent natural occurring networks. This has lead to various generalised BA models \cite{Pennock02winnersdont, dorogovtsev2000structure, dorogovtsev2001anomalous} in which preferential and random uniform attachment are combined. This construction is similar to the union of ER with BA, where ER adds some uniform attachment to BA.

To compute the degree distribution, we can use the method from section \ref{ER2WS}:

\begin{equation}\label{ER2BA_P}
P^{\mathcal{C}}(k) \sim \sum_\epsilon^k \sum_j^k P_{ce}(\epsilon | j,k+ \epsilon -j) P^{G}(j) P^{B}(k+ \epsilon -j)
\end{equation}
where the approximate degree distribution of BA is $P^B(k) \sim 2m^2 k^{-3}$ \cite{PhysRevE.65.057102}.

However the hypergeometric function, $P_{ce}(\epsilon | j,k+ \epsilon -j)$ is computationally expensive. It is difficult to calculate the sums in Eq.  (\ref{ER2BA_P})  for large values of 
  $k$ and thereby assess how the additional uniform attachment influence the power law form of the BA network. To extract the asymptotic behaviour, we now make use of a Fokker-Planck approach. 
The idea is to begin with the BA network and then in each time step add a new uniformly drawn random edge to the BA network. The new edges are taken from the set of ER edges not common with the BA edge set. Since there are $nm$ edges in BA,  from corollary \ref{lemma:ExpectedOverlap} it follows that there are $nmp$ common edges or ${n \choose 2}p - nmp$ non-common edges.

Let $u(k, t)$ be the number of vertices of degree $k$ at time step $t$. At each time step, a new edge will change the degree of 2 vertices. Specifically the number of degree $k$ vertices increases by one if the new edge attaches to a degree $k-1$ vertex. This has probability $u(k-1, t)/n$. Similarly the number of $k$ vertices will decrease by one if the new edge attaches to a degree $k$ vertex. This leads to the following equation,

\begin{equation}\label{iterative_ER2BA_dd}
u(k, t+1) = u(k,t) + u(k-1, t)/n - u(k, t)/n
\end{equation}

By replacing $t$ and $k$ by continuous variables, we obtain a partial differential equation which will be a good approximation for large values of $k$.

\begin{equation}
\frac{\partial u}{\partial t} + \frac{1}{n} \frac{\partial u}{\partial k} =  0. 
\label{continuum} 
\end{equation}

Using the initial condition $u(k,0) = P^B(k) \cdot n \sim 2nm^2 k^{-3}$ ($P^B(k) = 0$ for $k < m$), we can solve for $u(k,t)$ at $t=2 \cdot {n \choose 2}p - nmp \approx np(n-m)$ (twice the number of non-common edges because there are two vertices that change at each time step) and find,
\begin{equation}
P^{\mathcal{C}}(k) = u(k, t) = \frac{2nm^2}{(k + 2n(p-p^2))^3}
\label{asymptotic}
\end{equation}
In Fig. \ref{ER2BA_degree} we compare this asymptotic expression to the analytic expression derived by iterating Eq. ({\ref{iterative_ER2BA_dd}) and to simulation results. The continuum approximation of Eq. (\ref{continuum}) does not really have a region of validity since the finite number of vertices limits the size of the degree and the asymptotic limit cannot be reach. However, the iterated solution of the Fokker-Planck equation (\ref{iterative_ER2BA_dd}) matches the simulations well.

\begin{center}
  \includegraphics[width=130mm]{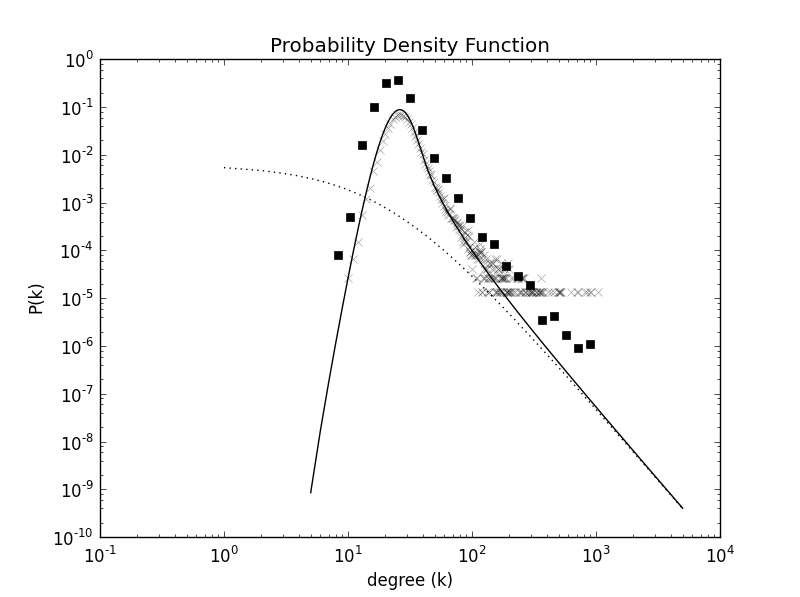}
    \captionof{figure}{Parameters: $n=75000$, $m=10$ and $p=2m/(n-1)$ (chosen such that both networks have equal number of edges) .The solid line plots the iterative method, Eq. (\ref{iterative_ER2BA_dd}). The dotted line plots the closed form Eq. (\ref{asymptotic}). The crosses represent the distribution obtained from simulations. The degree-distribution log-binning of the crosses is plotted with squares.}
\label{ER2BA_degree}   
\end{center}
 
\subsection{Watts-Strogatz with Watts-Strogatz}\label{WS2WS}
Let $W^1_{w,q}$ and $W^2_{w',q'}$ be two WS networks. If $q \approx q' \rightarrow 1$, the two WS can be expressed as two ER (section \ref{ER2ER}). In contrast the limit $q \rightarrow 1$ and $q' \rightarrow 0$  is identical to the union of ER and WS considered in section \ref{ER2WS}. We do not discuss the case for general values of $q$ and $q'$.

\subsection{Watts-Strogatz with Barab\'{a}si-Albert}\label{WS2BA}
Let $W_{w,q}$ and $B_{m}$ be WS and BA graphs respectively. There are $wn/2$ edges in WS and $nm$ edges in BA. It is most interesting to consider the case when the graphs have approximately equal number of edges, i.e. $w \approx 2m$. We ignore the case $q \rightarrow 1$, as it is similar to Sec. \ref{ER2BA}.

For $q \rightarrow 0$, the degree distribution of $\mathcal{C}$ is straight forward. Since the probability of rewiring is low, most of the lattice edges in WS remain unchanged. Hence most of the vertices in WS have degree $w$, and in turn contributes to an increase in the degree of the vertices of BA by $w$. This gives the approximation for:

\begin{equation}\label{eq:WS2BA_dd}
P^{\mathcal{C}}(k) \sim 2m^2 (k-w)^{-3}
\end{equation}

The degree distribution can be refined by considering the common edges between WS and BA. However the expression will be in open form and provides little insights. From simulations ($q \rightarrow 0$), Eq. \ref{eq:WS2BA_dd} is sufficient to approximate the distribution (Fig. \ref{WS2BA_dd}).  
\begin{center}
  \includegraphics[width=130mm]{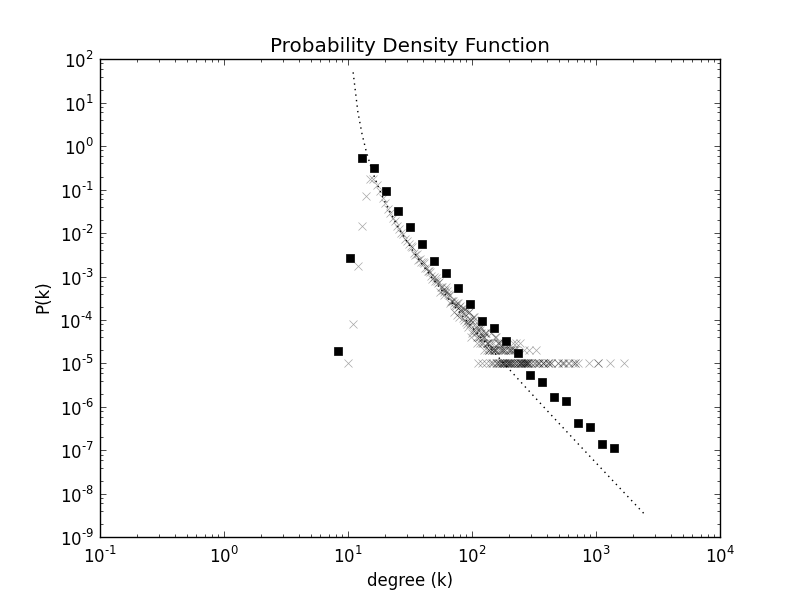}
    \captionof{figure}{Parameters: $n=100000$, $w=10$, $m=5$ and $q = 0.1$. The crosses represents the degree distribution obtained from simulations. The dotted line plots Eq. \ref{eq:WS2BA_dd}. The degree-distribution log-binning of the crosses is plotted with squares.} \label{WS2BA_dd}
\end{center}
 
\subsection{Barab\'{a}si-Albert with Barab\'{a}si-Albert}\label{BA2BA}
The key feature of BA is that the degree distribution is scale-free, i.e. follows a power-law in the degree. Hence we want to know if the union of two BA retain the features. Similar to equation  \ref{ER2WS_P}, the probability density function is the combined probability of:

\begin{equation}\label{eq:BA2BA_P}
P^{\mathcal{C}}(k) \sim \sum_\epsilon^k \sum_j^k P_{ce}(\epsilon | j,k+ \epsilon -j)P^{B}(j) P^{B}(k+ \epsilon -j)
\end{equation}

Unfortunately, we are not able to simplify equation \ref{eq:BA2BA_P} to derive the asymptotic behaviour. However simulations (figure \ref{BA2BA_dd}), indicate a power-law distribution for large $k$.

\begin{center}
  \includegraphics[width=130mm]{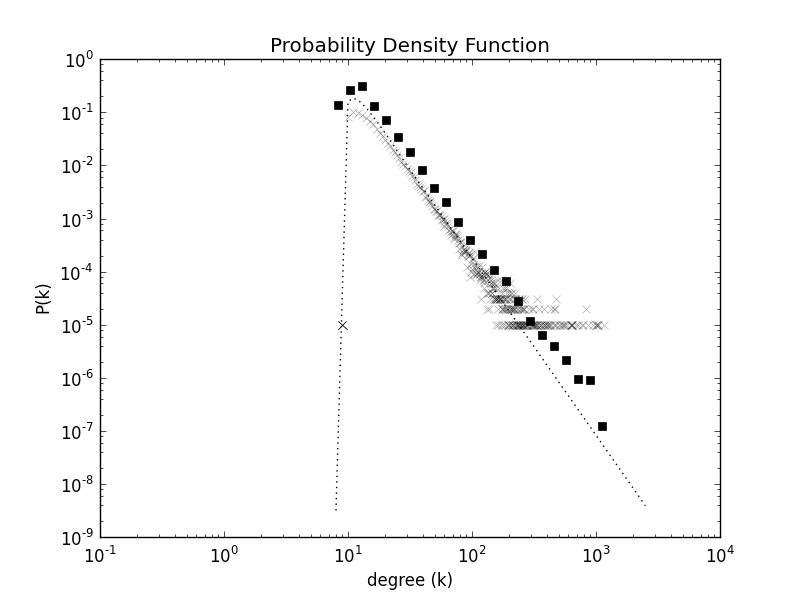}
    \captionof{figure}{The dashed line represents Eq. \ref{eq:BA2BA_P} for $n=100000$ and $m=5$. The crosses represent the results of simultions. The degree-distribution log-binning of the crosses is plotted with squares.}\label{BA2BA_dd}
\end{center}

 Unlike in Sec. \ref{ER2BA}, we are not able to apply the same method to the union of two BA networks. In Sec. \ref{ER2BA} we began with a BA and then we added uncorrelated random edges onto it. In the present case the preferential attachment induces correlations which cannot be neglected. To illustrate this better, let the first BA have red edges, and the second BA to have blue edges. Begin with the BA with red edges, and assume we can put the first few blue edges randomly like ER. However the subsequent blue edges have a preferential bias for {\bf only the blue} edges. Since u(k,t) in section \ref{ER2BA} does not differentiate the colours on the edges, the method does not apply.

Figure \ref{DD_all} shows the degree rank plot of different combinations of the network union. It is clear that the distribution of ER/BA is slightly different from BA/BA. The union of BA/BA has more high degree vertices than those of BA/ER.

\section{Clustering Coefficients of Edge Union of Networks}
Let us now consider the clustering coefficient of the composite network. While we are not always able to analytically determine the clustering coefficient, we are able to derive lower and upper bounds.  Again we assume that there are few common edges/triangles in the union, which can be verified with corollary \ref{lemma:ExpectedOverlap} and corollary \ref{lemma:ExpectedOverlapClique}. This simplification is reasonable for sparse networks and it gives us simple relationships similar to Eq. (\ref{eq:WS2WS}).

\subsection{Watts-Strogatz with Watts-Strogatz}\label{WS2WS_Cluster}
The clustering coefficient of such a union can be computed analytically in the limit $p \approx q$ and $p, q \rightarrow 0$. Define $T(v_i)$ as the number of subgraphs with 3 edges and 3 vertices (triangles) which all share the vertex $v_i$.   Recall the clustering coefficient equation for a network:

\begin{eqnarray}\label{eq:cc}
\mbox{Clustering Coefficient} &=& \frac{1}{n} \sum_{i=0}^n \mbox{Clustering Coefficient of } v_i \\
 &=& \frac{1}{n} \sum_{i=0}^{n} \frac{T(v_i)}{{deg(v_i) \choose 2}}\\
\end{eqnarray}

Let $v_{i,w^1}$ and $v_{i,w^2}$ be the vertices of $W^1$ and $W^2$ respectively, which are mapped to the $i^{th}$ vertex ($v_{i,c}$) of $\mathcal{C}$. Since $p \approx q$ and $p, q \rightarrow 0$, almost every vertex of $W^1$ and $W^2$ are similar in structure. i.e. $T(v_{i,w^1}) \approx T(v_{i,w^2})$ and $deg(v_{i,w^1}) \approx deg(v_{i,w^2})$ Hence even with random pairing, every vertex of the composite network $\mathcal{C}$ will be similar. We therefore have for small $w$:

\begin{eqnarray*}
\mbox{Clustering Coefficient of } \mathcal{C} &=& \frac{1}{n} \sum_{i=0}^n \mbox{Clustering Coefficient of vertex } v_{i,c}  \\
  &=& \frac{1}{n} \sum_{i=0}^{n} \frac{T(v_{i,c})}{{deg(v_{i,c}) \choose 2}}
\end{eqnarray*}
 
Because $w$ is small ($w << n$), we assume there are few common edges. Hence $T(v_{i,c}) = T(v_{i,w^1}) + T(v_{i,w^2}) \approx 2T(v_{i,w^1})$ and $deg(v_{i,c}) \approx 2 deg(v_{i,w^1})$. Then:

\begin{eqnarray}
\mbox{Clustering Coefficient of } \mathcal{C} &\approx& \frac{1}{n} \sum_{i=0}^{n} \Bigg( \frac{2 T(v_{i,w^1})}{{2deg(v_{i,w^1}) \choose 2}}\Bigg) \label{cc_ws2ws_approx} \nonumber \\
  &\approx&  \frac{1}{n} \sum_{i=0}^{n} \frac{1}{2} \Bigg( \frac{T(v_{i,w^1})}{{deg(v_{i,w^1}) \choose 2}}\Bigg) \nonumber \\
&\approx& \frac{1}{2} \mbox{ Clustering Coefficient of } W^1 \label{eq:WS2WS}
\end{eqnarray}

Fig. \ref{WS2WS_cc} compares the results of Eq. (\ref{eq:WS2WS}) with actual simulation of the edge union of two WS. The dots plot the expected clustering coefficient from Eq. (\ref{eq:WS2WS}) by halving the clustering coefficient of one of the WS. The crosses represent the actual clustering coefficient obtained from simulations. The details of the simulations are explained in the appendix.

\begin{center}
  \includegraphics[width=130mm]{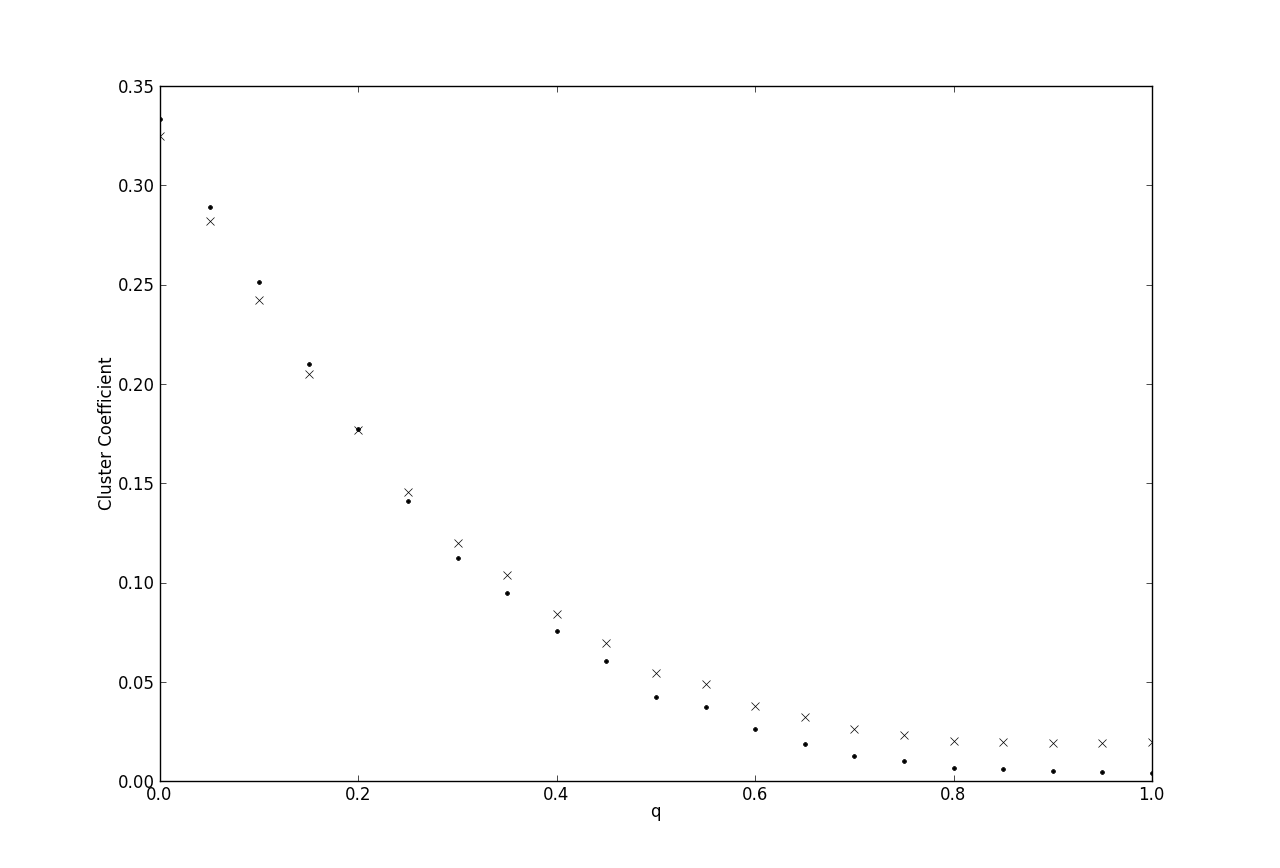}
    \captionof{figure}{Edge Union of two WS networks for $n=1000$, $w'=w=20$ and $q' = q$. The dots are from Eq. (\ref{eq:WS2WS}) and the crosses are from the simulations.}\label{WS2WS_cc}
\end{center}
 
\subsection{Watts-Strogatz with Barab\'{a}si-Albert}\label{WS2BA_Cluster}
The clustering coefficient of the composite is difficult to approximate analytically as there is no simple way to approximate the number of triangles for BA. Instead, we derive a lower bound and an upper bound for the composite:

\begin{eqnarray*}
\mbox{Clustering Coefficient of } \mathcal{C} &=& \frac{1}{n} \sum_{i=0}^n \mbox{Clustering Coefficient of vertex } v_{i,c}  \\
  &=& \frac{1}{n} \sum_{i=0}^{n} \frac{T(v_{i,c})}{{deg(v_{i,c}) \choose 2}}
\end{eqnarray*}

Assume $T(v_{i,c}) \approx T(v_{i,ws}) + T({v_{i,ba}})$.

\begin{eqnarray*}
\mbox{Clustering Coefficient of } \mathcal{C} &\approx& \frac{1}{n} \sum_{i=0}^{n} \frac{T(v_{i,ws}) + T(v_{i,ba})}{{deg(v_{i,c}) \choose 2}} \\
&\geq& \frac{1}{n} \sum_{i=0}^{n} \frac{T(v_{i,ws})}{{deg(v_{i,c}) \choose 2}}
\end{eqnarray*}

Since we are considering $q \rightarrow 0$, we can make the following two observations: 1) The number of triangles generated by WS is generally larger than BA, hence choosing $T(v_{i,ws})$ will get a tighter bound. 2) Most of the vertices in WS have the same number of triangles attached due to the assumed low rewiring probability. Let $\alpha$ be the average number of triangles for any vertex in WS given $q$. Thus:

\begin{eqnarray}\label{approx_lower}
\mbox{Clustering Coefficient of } \mathcal{C} &\geq& \frac{1}{n} \sum_{i=0}^{n} \frac{T(v_{i,ws})}{{deg(v_{i,c}) \choose 2}} \\
  &\approx& \frac{\alpha}{n} \sum_{i=0}^{n} \frac{1}{{deg(v_{i,c}) \choose 2}} \\
  &=& \alpha \sum_{k=0}^{n} P^{\mathcal{C}}(k) \frac{1}{{k \choose 2}} 
\end{eqnarray}

From \cite{barrat00}, the clustering coefficient of WS decreases at the rate $(1-q)^3$ and  $\alpha \approx (w/2)^2$ for $q=0$. This is also equal to the rate of decrease to the number of triangles since $(1-q)^3$ is the probability that none of the edges of a triangle is rewired. Thus we have the following lower bound on the clustering coefficient of the composite network:

\begin{equation}\label{eq:WS2BA_cc}
\mbox{Clustering Coefficient of } \mathcal{C} \geq \frac{(1-q)^3 w^2}{4} \sum_{k=0}^{n} P^{\mathcal{C}}(k) \frac{1}{{k \choose 2}}
\end{equation}

\begin{center}
  \includegraphics[width=130mm]{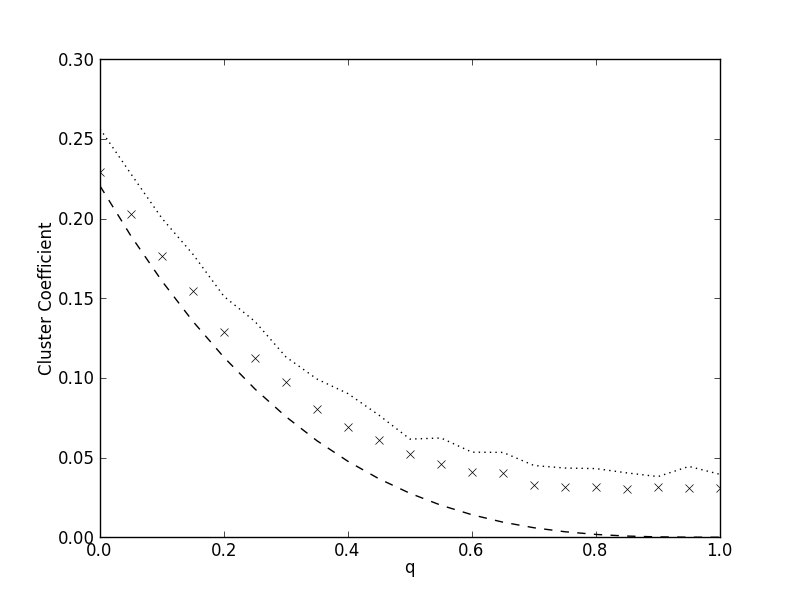}
    \captionof{figure}{Parameters: $n=1000$, $w=10$ and $m=5$. The crosses plot the  clustering coefficient from the simulation of the composite network. The dashed line is the lower bound from Eq. \ref{eq:WS2BA_cc}. The dotted line is the upper bound from Eq. \ref{eq:WS2BA_cc_U}}\label{WS2BA_cc}
\end{center}

To obtain an upper bound on the clustering coefficient, we can round up the clustering coefficient contribution originating from BA. I.e.

\begin{eqnarray}
\mbox{Clustering Coefficient of } \mathcal{C} &\approx& \frac{1}{n} \sum_{i=0}^{n} \frac{T(v_{i,ws}) + T(v_{i,ba})}{{deg(v_{i,c}) \choose 2}} \nonumber \\
&=& \frac{1}{n} \sum_{i=0}^{n} \frac{T(v_{i,ws})}{{deg(v_{i,c}) \choose 2}} + \frac{1}{n} \sum_{i=0}^{n} \frac{T(v_{i,ba})}{{deg(v_{i,c}) \choose 2}} \nonumber \\
&\leq& \frac{1}{n} \sum_{i=0}^{n} \frac{T(v_{i,ws})}{{deg(v_{i,c}) \choose 2}} + \frac{1}{n} \sum_{i=0}^{n} \frac{T(v_{i,ba})}{{deg(v_{i,ba}) \choose 2}} \nonumber \\
&\approx& \frac{(1-q)^3 w^2}{4} \sum_{k=0}^{n} P^{\mathcal{C}}(k) \frac{1}{{k \choose 2}} \nonumber \\
&& + \mbox{Clustering Coefficient of } B_m. \label{eq:WS2BA_cc_U}
\end{eqnarray}

\subsection{Erd\H{o}s-R\'{e}nyi with Erd\H{o}s-R\'{e}nyi}\label{ER2ER_Cluster}
From section \ref{ER2BA}, the union of two ER is an ER network. For simplification, let the composite network be $G^c_p$. Hence the clustering coefficient of $G^c_p$ can be estimated like any ER network.

However the ``clustering coefficient" of ER is often measured by an alternative similar metric, Transitivity Ratio \cite{newman_transitivity}. It is the ratio of the number of triangles to the number of connected triples:

\begin{equation}\label{eq:transitivity}
\mbox{Transitivity Ratio of }G^c_p = \frac{{n \choose 3}p^3}{{n \choose 3}p^2} = p.
\end{equation}

\subsection{Clustering coefficient of other combinations}\label{Others_Cluster}
In the next section, we show that the clustering coefficient of ER$\cup$BA and BA$\cup$BA are small in general (Fig. \ref{CC_all}). For this reason we don't derive any analytical results in this paper.

WS$\cup$ER is a more interesting combination, but is difficult to deduce the clustering coefficient of the union from the know results for the individual families of networks. The clustering coefficient of WS is measured by Eq. \ref{eq:cc} and the ``clustering coefficient" (transitivity ratio) of ER is measured by Eq. \ref{eq:transitivity}. Since these metrics are different in general \cite{Schank}, we are only able to deduce a weak estimate of the union (Fig. \ref{WS2ER_cc}).

Firstly in the composite network, the maximum number of triangles at vertex $v_{i,c}$ is ${deg(v_{i,c})\choose2}$. Among them, $T(v_{i,ws})$ triangles are from WS. The rest of the triangles exist if the edges of ER connect the neighbors of $v_{i,c}$ that are not connected from the edges of WS.

With probability $p$ from ER, the number of neighbors pairs at $v_{i,c}$ that are connected by the edges of ER (and not from WS) is:

\begin{equation*}
\Big({deg(v_{i,c})\choose2} - T(v_{i,ws}) \Big)p
\end{equation*}

Hence, 
\begin{eqnarray}
\mbox{Clustering Coefficient of } \mathcal{C} &\approx& \frac{1}{n} \sum_{i=0}^{n} \frac{\Big({deg(v_{i,c})\choose2} - T(v_{i,ws}) \Big)p + T(v_{i,ws})}{{deg(v_{i,c}) \choose 2}} \nonumber \\
&=& \frac{1}{n} \sum_{i=0}^{n} \frac{{deg(v_{i,c})\choose2}p +T(v_{i,ws})(1-p)}{{deg(v_{i,c}) \choose 2}} \nonumber \\
&=& p + \frac{(1-p)}{n} \sum_{i=0}^{n} \frac{T(v_{i,ws})}{{deg(v_{i,c}) \choose 2}} \nonumber
\end{eqnarray}

Together with approximation from Eq. \ref{approx_lower} and Eq. \ref{eq:WS2BA_cc},
\begin{eqnarray}
\mbox{Clustering Coefficient of } \mathcal{C} &=& p + \frac{(1-p)}{n} \sum_{i=0}^{n} \frac{T(v_{i,ws})}{{deg(v_{i,c}) \choose 2}} \nonumber \\
&\approx& p + \frac{(1-p)(1-q)^3 w^2}{4} \sum_{k=0}^{n} P^{\mathcal{C}}(k) \frac{1}{{k \choose 2} \label{eq:WS2ER_cc}}
\end{eqnarray}

\begin{center}
  \includegraphics[width=120mm]{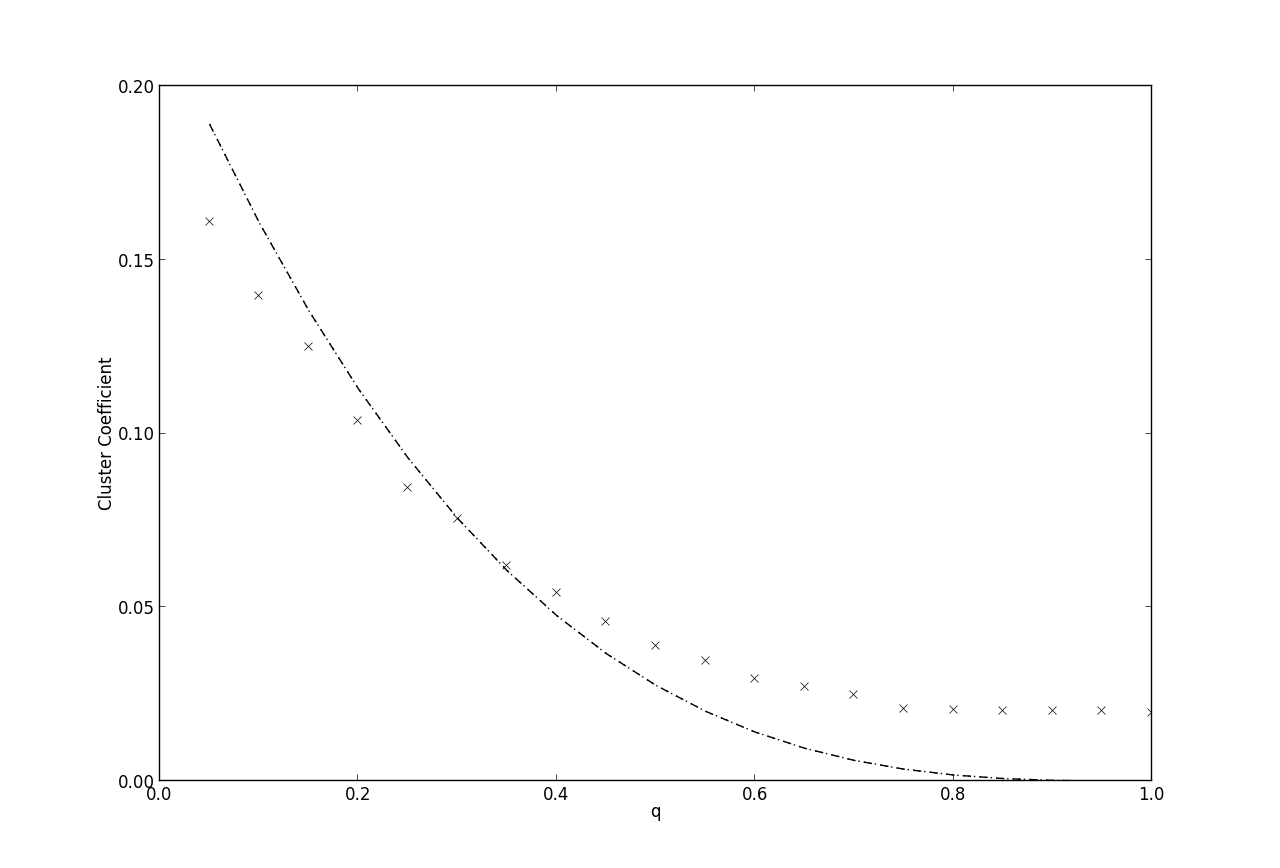}
    \captionof{figure}{Parameters: $n=1000$, $w=10$ and $p\approx 0.01$. The crosses plot the  clustering coefficient from the simulation of the composite network, WS$\cup$ER. The dashed line is the approximation from Eq. \ref{eq:WS2ER_cc} for various values of WS's rewiring probability $q$. }\label{WS2ER_cc}
\end{center}


\section{General Observations}
Our study is not exhaustive because of the large parameters space. The main constraint on the discussion in the present paper consist in the assumption that the two networks being added together have approximately equal number of vertices and edges.  When this assumption is fulfilled we have obtained the following results.

The first observation is that the union with lattice-like WS, yields a composite network with clustering coefficient higher than the other combinations (Figure \ref{CC_all}). This is because there are more triangles at the vertices in a WS network.

For example the clustering coefficient of $v_{i,er}$, $v_{i,ba}$ and $v_{i,ws}$ be proper lowest-term fraction $^a/_b$, $^c/_d$ and $^e/_f$. The numerator is the number of triangles and the denominator is the number of triples. Since the clustering coefficient of $ER < BA < WS$, we have

\begin{equation*}
\frac{a}{b} < \frac{c}{d} < \frac{e}{f}
\end{equation*}

Assuming few common edge, the clustering coefficient of the composite vertex $v_{i,c}$ is the ratio of the total number of triangles to the total number of triples. i.e. for ER$\cup$BA, the clustering coefficient is $(a+c)/(b+d)$. By the mediant inequality,

\begin{equation*}
\frac{a}{b} < \frac{a+c}{b+d} < \frac{c}{d} < \frac{c+e}{d+f} < \frac{e}{f}
\end{equation*}

Hence unions with lattice-like WS (i.e. $(c+e)/(d+f)$) will yield higher clustering coefficient than other combinations. In fact given equal number of edges, the clustering coefficient of $WS \cup WS$ is greater than $WS \cup BA$, which is greater than $WS \cup ER$ (See Fig. \ref{CC_all}).

\begin{center}
  \includegraphics[width=115mm]{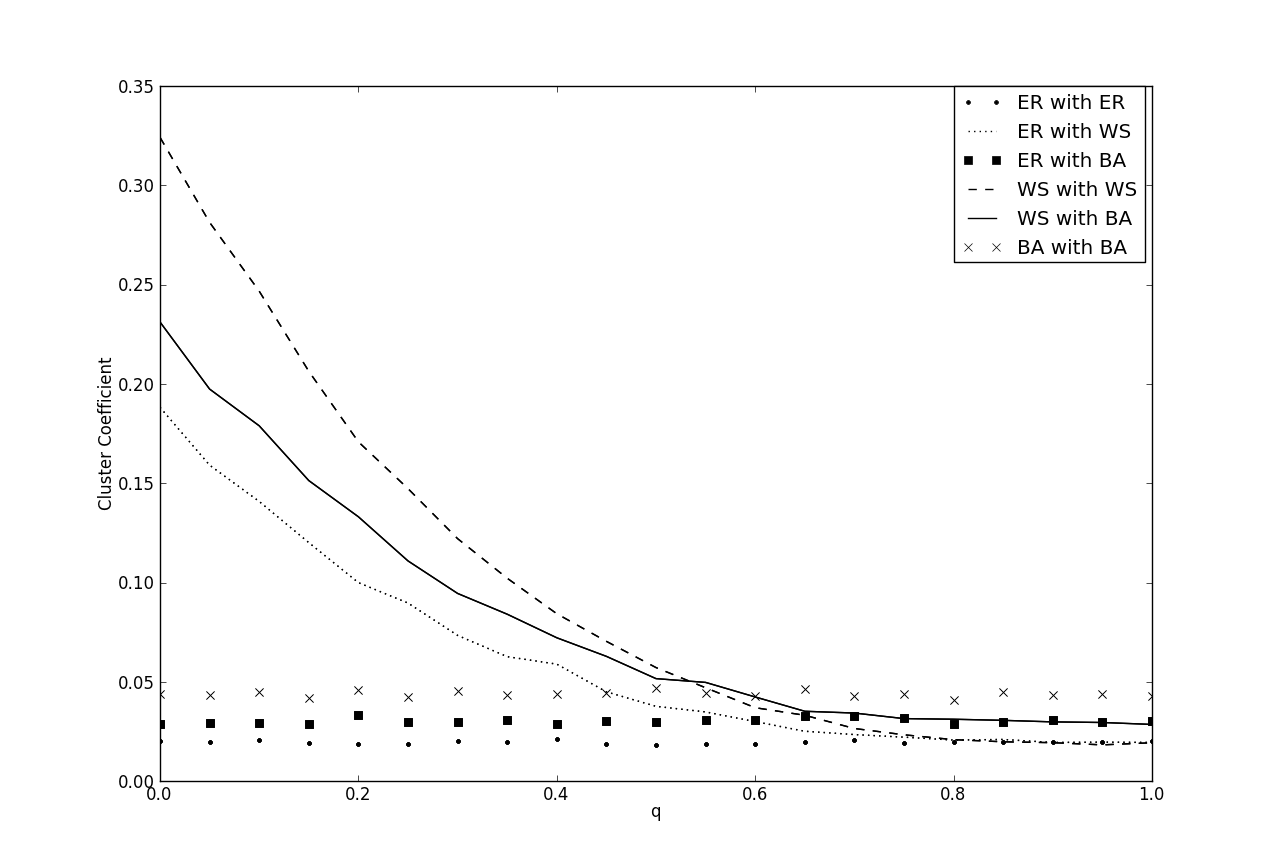}
    \captionof{figure}{Parameters: $n=1000$, $w=10$, $m=5$ and p = $k/(n-1)$. The x-axis varies the rewiring probability ($q$) of WS. For small $q$, combinations with WS are high. Furthermore, the clustering coefficient of $WS \cup WS$ is greater than $WS \cup BA$, which is greater than $WS \cup ER$}\label{CC_all}
\end{center}

From section \ref{sec:DD}, the degree distribution of unions with BA have power-law-like tail distribution. The intuition is that with high probability, the high degree vertices in BA have many common edges with the second network. Hence the change at the high degree vertices is small, allowing the tail distribution of the BA network to dominate. Fig. \ref{DD_all} compares the degree-rank plot of the different union combination.

\begin{center}
  \includegraphics[width=115mm]{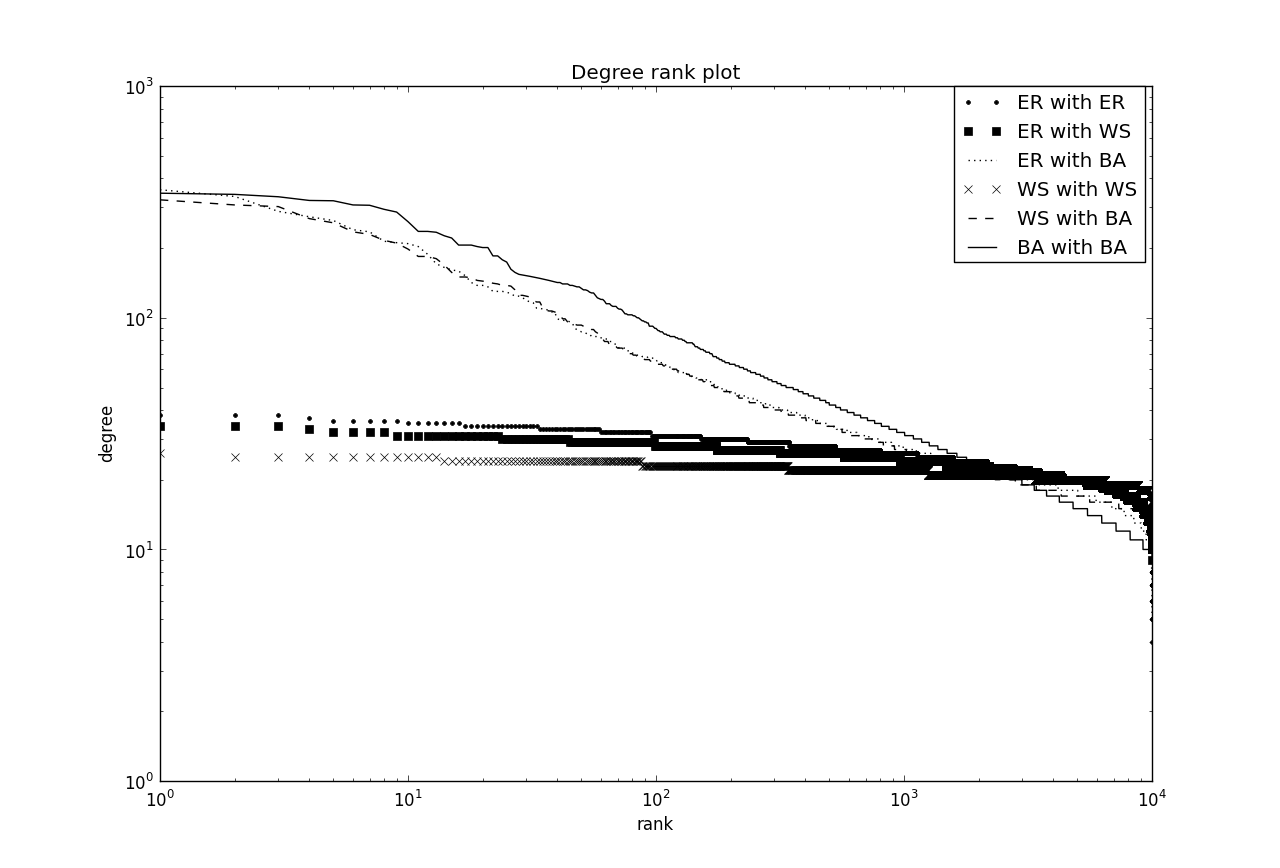}
    \captionof{figure}{Same parameters as Fig. \ref{CC_all} with $q=0.1$. Combinations with BA have power-law-like distribution. Note that $WS \cup BA$ is almost on the same line as $ER \cup BA$. Lastly, $BA \cup BA$ has a slightly different distribution than the other BA combinations.}\label{DD_all}
\end{center}

\section{Discussions and Future Work}
It is common for real-world networks to have similar ``layered dimensions". In temporal networks where connections vary across time, each layer is a time snapshot of the connections of the same vertex set \cite{Holme201297}. The edge union of the network snapshots is analogous to extending the time interval in the sampling, i.e. aggregation of a temporal stream of edges over a fixed number of time distance \cite{conf/icdm/CaceresBG11}. 

Aggregation helps to identify the optimal size of the time interval at which to resolve a temporal network. For example assume that  vertex  $v_i$ and $v_j$ are connected over time $T$ with probability $1-exp(-pT)$ where $p \in [0,1]$ is an arbitrary constant. As we extend (aggregate) the time to $2T$, Eq. (\ref{eq:ER2ER}) gives the edge probability that $v_i$ and $v_j$ are connected over time $2T$ is $1-exp(-2pT)$. This is the same result as we would get from Eq. \ref{eq:ER2ER} if we add the network at $T$ to the one developed over time $T$ to $2T$.

There are cases where the resultant network cannot be easily expressed like the above example or ER $\cup$ ER. This paper address the problem to understand the statistical properties for such cases. In addition, the methods provides a framework for future studies on combinations of other families of network generators like Kronecker Graph \cite{Leskovec}, Exponential Random Graph \cite{frank} and Multifractal Network Generator \cite{Palla}.

Moreover the present investigation can be further developed to study a closely related model,   which considers the percolation across one network layer of an effect induced by an adjacent layer. Such networks are  known as   
 interdependent networks \cite{Buldyrev1,Buldyrev2}.  
  
For example, two networks $G_1(V_1,E_1)$ and $G_2(V_2,E_2)$ form an interdependent network if there exists an edge between a vertex in $V_1$ and $V_2$. Let the set of such edges between $G_1$ and $G_2$ be $E_3$. Hence an interdependent network is formally defined as $G(V_1 \cup V_2, E_1 \cup E_2 \cup E_3)$ if $E_3 \neq \emptyset$. However the resulting network can also be considered as the edge union of 3 networks on the same vertex set.

 From the previous example, we can add isolated vertices to both networks such that they form the same vertex set $V = V_1 \cup V_2$. In this way we extend all three networks in order to consider them as define on a common vertex set. In this way we can obtain that the edge union of all 3 networks form the interdependent network $G(V, E_1 \cup E_2 \cup E_3)$.  

These studies are beyond the scope of this paper but we hope that our paper may help to motivate further theoretical studies on more complex, yet closer to real-world network models.

\section{Acknowledgment}
We would like to thank Tim Evans and the anonymous referees for their comments and suggestions to improve this paper.

\section*{References}

\appendix
\section{Simulation Details}
Simulations are implemented in Python 2.7. This is to make use of the existing numerical and network libraries like numpy and NetworkX to minimize implementation errors. The trade off is slower runtime, resulting in smaller sample size.

For simulations, two networks with the same number of vertices are generated with NetworkX network library. The size of the vertex set is chosen such that we can maximize the sample size while computationally feasible (less than 1 hour). For instance the size of the vertex set in Fig. \ref{ER2BA_degree} is 10000 as the iterative method (equation \ref{iterative_ER2BA_dd}) is computationally expensive. In contrast, to generate figure \ref{WS2BA_dd}, it is possible to use vertex set of size 100000.

The size of the edge set also determines the computation costs. In this paper, we chose the minimum size edge set such that the generated networks are connected. The decision for studying connected network is arbitrary from an analysis point of view. However it is useful if we want to study other metrics like average shortest distance in the future.

To ensure ER is a connected network, we need the probability, $p$, that two vertices are connected to satisfy $p > \ln{n}/n$. Furthermore in this paper we only consider the case where both networks have approximately equal number of edges. Thus for WS and BA to have approximately equal edge set size as ER, $w \approx (n-1)p$ and $m \approx (n-1)p/2$ respectively.

We note that unequal size edge sets introduce additional complexity to the study, since the larger edge set will dominate the resultant network. Ignoring this aspect significantly reduces the complexity of the computation.

\end{document}